\newtheorem{theorem}{Theorem}[section]
\newtheorem{construction}[theorem]{Construction}
\begin{document}

\title{Asymptotically MDS Array BP-XOR Codes}

\author{\IEEEauthorblockN{\c{S}uayb \c{S}. Arslan}
\IEEEauthorblockA{Department of Computer Engineering\\
MEF University\\
Maslak, Istanbul 34099\\
Email: arslans@mef.edu.tr}
%\and
%\IEEEauthorblockN{Homer Simpson}
%\IEEEauthorblockA{Twentieth Century Fox\\
%Springfield, USA\\
%Email: homer@thesimpsons.com}
%\and
%\IEEEauthorblockN{James Kirk\\ and Montgomery Scott}
%\IEEEauthorblockA{Starfleet Academy\\
%San Francisco, California 96678--2391\\
%Telephone: (800) 555--1212\\
%Fax: (888) 555--1212}
}

% conference papers do not typically use \thanks and this command
% is locked out in conference mode. If really needed, such as for
% the acknowledgment of grants, issue a \IEEEoverridecommandlockouts
% after \documentclass

% for over three affiliations, or if they all won't fit within the width
% of the page, use this alternative format:
%
%\author{\IEEEauthorblockN{Michael Shell\IEEEauthorrefmark{1},
%Homer Simpson\IEEEauthorrefmark{2},
%James Kirk\IEEEauthorrefmark{3},
%Montgomery Scott\IEEEauthorrefmark{3} and
%Eldon Tyrell\IEEEauthorrefmark{4}}
%\IEEEauthorblockA{\IEEEauthorrefmark{1}School of Electrical and Computer Engineering\\
%Georgia Institute of Technology,
%Atlanta, Georgia 30332--0250\\ Email: see http://www.michaelshell.org/contact.html}
%\IEEEauthorblockA{\IEEEauthorrefmark{2}Twentieth Century Fox, Springfield, USA\\
%Email: homer@thesimpsons.com}
%\IEEEauthorblockA{\IEEEauthorrefmark{3}Starfleet Academy, San Francisco, California 96678-2391\\
%Telephone: (800) 555--1212, Fax: (888) 555--1212}
%\IEEEauthorblockA{\IEEEauthorrefmark{4}Tyrell Inc., 123 Replicant Street, Los Angeles, California 90210--4321}}

% use for special paper notices
%\IEEEspecialpapernotice{(Invited Paper)}

% make the title area
\maketitle

% As a general rule, do not put math, special symbols or citations
% in the abstract
\begin{abstract}
Belief propagation or message passing on binary erasure channels (BEC) is a low complexity decoding algorithm that allows
the recovery of message symbols based on bipartite graph prunning process. Recently, array XOR codes have attracted attention for storage systems due to their burst error recovery performance and easy arithmetic based on Exclusive OR (XOR)-only logic operations. Array BP-XOR codes are a subclass of array XOR codes that can be decoded using BP under BEC. Requiring the capability of BP-decodability in addition to Maximum Distance Separability (MDS) constraint on the code construction process is observed to put an upper bound on the maximum achievable code block length, which leads to the code construction process to become a harder problem. In this study, we introduce asymptotically MDS array BP-XOR codes that are alternative to exact MDS array BP-XOR codes to pave the way for easier code constructions while keeping the decoding complexity low with an asymptotically vanishing coding overhead. We finally provide and analyze a simple code construction method that is based on discrete geometry to fulfill the requirements of the class of asymptotically MDS array BP-XOR codes.
\end{abstract}

% no keywords

% For peer review papers, you can put extra information on the cover
% page as needed:
% \ifCLASSOPTIONpeerreview
% \begin{center} \bfseries EDICS Category: 3-BBND \end{center}
% \fi
%
% For peerreview papers, this IEEEtran command inserts a page break and
% creates the second title. It will be ignored for other modes.
\IEEEpeerreviewmaketitle

\section{Introduction}
% no \IEEEPARstart
Array codes are linear codes defined for two dimensional data structures that are defined by both data and parity values organized in a matrix form. These codes are quite attractive candidates for burst error recovery in communication and distributed storage systems \cite{Farrell} and provide data reliability with optimal time/space consumption using Maximum Distance Separability (MDS) constraint in the code construction process. Moreover,  a great deal of work has been done and many improvements have been proposed for these codes over the years \cite{Blaum} to secure simpler math and low-complexity computations while still maintain the MDS property.

Typically, any linear code can be represented using a bipartite graph either using the parity check matrix or the generator matrix of the code \cite{shulin}. Using the generator matrix representation, the corresponding bipartite graph has two types of nodes: Nodes that are used to decode (check or coded nodes) and nodes that are decoded (information nodes). Nodes in bipartite graph representation are connected with edges to represent node adjacency. The neighbors of node $j$ (neighbor set), denoted by $\mathcal{N}_j$, is the set of all nodes connected to node $j$. The cardinality of the neighbour set is called the \emph{degree} of node $j$. The Belief Propagation (BP) algorithm a.k.a. message passing algorithm is an iterative process (updating nodes and edges) to decode data from coded nodes over symmetric erasure channels using the bipartite representation of the code. At the onset of the BP algorithm, we begin by setting all the contents of information nodes to NULL that need to be decoded. Then, we look for a degree-one coded node and copy the content to its neighbor information node  by replacing NULL. Next, we update all the coded nodes that are connected to the this neighbor and eliminate the edges that established neighborhood relationship. This completes the first step, and in the next iteration we continue applying the same methodology until there remains no information node with NULL content. If algorithm stops prematurely during iteration, we claim a decoding failure, otherwise we report a decoding success.

Array codes have recently been studied under BP decoding \cite{Wang} and useful upper bounds are derived in \cite{BPXOR} that theoretically establishes the relationship between the block length (and hence the rate of the code), decodability and sparsity of the generator matrix i.e., the encoding/decoding complexity of the code. In this study, we shall demonstrate by relaxing the MDS constraint on the code construction process, we can also dramatically relax the previously found bounds on the code block length \cite{BPXOR} while keeping low complexity BP algorithm successfully decode the whole data block. Such an observation shall yield easier and more powerful code constructions. For instance, we shall consider one of the discrete geometry based codes known as Mojette codes that are recently studied within the context of low density parity check codes and are shown to reduce the node repair complexity  \cite{ArslanMojLDPC}. In our study, we  demonstrate an asymptotically MDS BP-XOR code construction method based on Mojette geometry. By providing and establishing an appropriate set of code parameters, we explicitly construct codes that fulfills the desired theoretical requirements.  

The rest of the paper is organized as follows. In Section \ref{SectionAMDS}, we provide the basics of array MDS BP-XOR codes and give some known results as well as state the main result of the paper. In Section \ref{SectionMoj}, we provide a discrete geometry construction of an asymptotically-MDS array BP-XOR codes. In Section \ref{SectionSim}, we validate our theoretical results by numerically plotting rate, code block length for discrete geometry construction. Finally, we conclude our paper in Section \ref{SectionConc}.

\section{Asymptotically MDS Array BP-XOR Codes}
\label{SectionAMDS}

Before defining the class of asymptotically MDS array BP-XOR codes, let us provide the conventional definition of MDS BP-XOR codes using the notation of reference \cite{BPXOR}.  

\subsection{Background}

Let $l$ be the symbol size in bits and $M = \{0,1\}^l$ be the symbol set from which we select our information as well as coded symbols. The fundamental operation we use is the Exclusive OR (XOR) that is used to add symbols logically bit by bit in binary domain. In our study, nodes represent blocks of data that contains one or more symbols in it. Symbols are the smallest data unit over which XOR operations are defined. 

An $[n,k,t,b]$ array BP-XOR code is a $b \times n$ two dimensional rate $r=k/n$ binary linear code $\mathcal{C} = [a_{i,j}]_{1 \leq i \leq b, 1 \leq j \leq n}$ in which the coding symbol $a_{i,j} \in M$ is the XOR of a subset of source symbols $I = \{v_1,\dots,v_{bk}\}$, typically structured as a $b \times k$ data matrix, and $I$ can be reconstructed from any $n-t$ columns of the linear code $\mathcal{C}$ using BP algorithm for an appropriate integer $t \leq n-k$.  The degree of a coded symbol $a_{i,j}$, denoted as $\sigma_{i,j}$, is the number of information symbols that participate in logical XOR operation i.e., $a_{i,j} = v_{z_1} \oplus \dots \oplus v_{z_{\sigma_{i,j}}}$ such that $v_{z_s} \in I$ for all $s \in \{1,\dots,\sigma_{i,j}\}$. A $t$-erasure correcting array BP-XOR code is MDS if the source symbols can be reconstructed from $k = n-t$ columns of $\mathcal{C}$.

For a given positive integer $b^\prime$ satisfying $b^\prime > b$, a $[n,k,t,b, b^\prime]$ asymptotically MDS array BP-XOR code $\mathcal{C}^a$ is a linear code with $i$-th column $(y_{i,1}, \dots, y_{i,b_i}) = (x_{1}, \dots, x_{bk})G_i$ for a $bk \times b_i$ generator matrix $G_i, i \in \{1,\dots,n\}$ such that $b^\prime = (1/n)\sum_ib_i$. Thus, the generator matrix for $\mathcal{C}^a$ is given by the $bk \times \sum_i b_i$ matrix,
\begin{eqnarray}
G_{\mathcal{C}^a} = [G_1 | G_2 | \dots | G_n].
\end{eqnarray}

What makes this code asymptotically MDS is that it is possible to perfectly reconstruct user data matrix $I$ from any $k$ column combinations of $\mathcal{C}^a$ using BP decoding and as $b \rightarrow \infty$ we have $b^\prime \rightarrow b$. Note that the raw source data need not be in standard $b \times k$ form. For any positive integer $g$ satisfying $b|g$ and $k|g$, the generator matrix $G_{\mathcal{C}^a}$ should work fine for different arrangements of the data block matrix such as $b/g \times kg$. We finally note that the code $\mathcal{C}^a$ is not in two dimensional standard rectangle form as in $\mathcal{C}$. However, we introduced another parameter $b^\prime$ to be able to make asymptotically MDS array BP-XOR codes analogous to standard MDS array codes defined over rectangle shape binary matrices. 

For a given fixed code rate $r$ and $n$, let us define $\epsilon(b,n)$ to be the maximum coding overhead\footnote{Since columns of $\mathcal{C}^a$ may have different sizes, the overhead depends on which $k$ columns are used for reconstruction. Also note that the coding overhead also depends on the number of columns $n$ in the code, so called array code blocklength.} of $\mathcal{C}^a$ satisfying $b^\prime = (1 + \epsilon(b,n))b$. The asymptotically optimal overhead property implies that as $\epsilon(b,n) \rightarrow 0$ we have $b\rightarrow \infty$. 

Letting $\sigma$ denote the maximum check node degree of a given array BP-XOR code, we note from \cite{BPXOR} that if $k = \sigma$ it is not hard to show that 
\begin{eqnarray}
% \nonumber % Remove numbering (before each equation)
  n &\leq& kb + 1 + \max\{k-3,0\}
\end{eqnarray}
the upper bound of which can be arbitrarily large (i.e., for $b \gg 1$) and allow any arbitrarily small $r$ to be possible. However, for $k > \sigma$ it is observed that the array code blocklength $n$ is upper bounded based on a specific choice of $k$ \cite{BPXOR}. In addition, we observe from the same study that for $b \gg 1$ and large enough $k$ i.e., $k > \sigma^2$ we have $n \leq k + \sigma -1$. This also implies that for large enough information block length $k$, the achievable rate will be close to 1, putting a constraint on the code design rate. 

\subsection{Main Result}

We begin with providing the following theorem that sets the necessary condition/s on the parameters for the existence of asymptotically MDS array BP-XOR codes.

\begin{theorem} \label{Thm21}
Let  $\mathcal{C}^a$ be a $[n,k,t,b, b^\prime]$ asymptotically MDS array BP-XOR code such that the maximum coded node degree satisfies $2 < \sigma < (bk-1)/(b^\prime-1)$. Then, we have
\begin{eqnarray}
n &\leq& k + \sigma - 1 + \\
&& \ \ \ \ \left\lfloor \frac{b(k(\sigma^\prime - \sigma) + (\sigma-1)\sigma^\prime) - (\sigma-1)(3\sigma/2 - 1)}{b(k - \sigma^\prime) + \sigma - 1} \right\rfloor \nonumber
\end{eqnarray}
where $\sigma^\prime = \sigma(1 + \epsilon(b,n))$ and $\epsilon(b,n)$ is the coding overhead.
\end{theorem}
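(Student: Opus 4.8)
The plan is to treat the floor as cosmetic and reduce the statement to a single polynomial inequality. Because $n-k-\sigma+1$ is an integer while the bracketed fraction $F$ is real, the claim $n \le k+\sigma-1+\lfloor F\rfloor$ is equivalent to $n-k-\sigma+1\le F$. My first step is to fix the sign of the denominator $D := b(k-\sigma')+\sigma-1$: writing $\sigma' = \sigma b'/b$ gives $D = bk - \sigma b' + \sigma - 1$, and $D>0$ is exactly the hypothesis $\sigma < (bk-1)/(b'-1)$. This both justifies clearing the denominator and explains the role of the upper bound on $\sigma$; the lower bound $\sigma>2$ discards the degree-one and degree-two regimes, which are instead governed by the $k=\sigma$ estimate $n\le kb+1+\max\{k-3,0\}$. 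Having established $D>0$, the target inequality is
\[
(n-k-\sigma+1)\,\bigl[b(k-\sigma')+\sigma-1\bigr]\ \le\ b\bigl[k(\sigma'-\sigma)+(\sigma-1)\sigma'\bigr]-(\sigma-1)\bigl(\tfrac{3\sigma}{2}-1\bigr).
\]

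Second, I would build the counting argument on the bipartite graph induced by $G_{\mathcal{C}^a}=[G_1|\cdots|G_n]$, which has $bk$ information nodes and $\sum_i b_i = nb'$ coded nodes, each of degree at most $\sigma$. The recovery requirement says that for \emph{every} choice of $k$ columns the induced subgraph is BP-decodable, i.e.\ it contains no stopping set and the peeling process resolves all $bk$ information nodes. Following the strategy of \cite{BPXOR}, I would fix a worst-case set of $k$ surviving columns (the choice that maximizes overhead, as flagged in the footnote) and track the peeling order: the process must be seeded and sustained by coded symbols of small degree, so each of the $\sigma-1$ ``boundary'' columns that launch BP must contribute a prescribed number of low-degree coded symbols. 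Summing these minimal contributions over degrees $1,2,\dots$ produces a triangular count, which is the origin of the correction term $(\sigma-1)(3\sigma/2-1)=\tfrac12(\sigma-1)(3\sigma-2)$.

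Third, I would balance this decodability-forced lower bound on the number of degree-constrained coded symbols against the upper bound supplied by the degree budget: across the $n-k$ non-systematic columns there are at most $\sigma$ edges per coded symbol and on average $b'$ coded symbols per column, so the factors of $\sigma'=\sigma b'/b$ enter precisely as the per-column edge budget once the exact-MDS row count $b$ is replaced by the average column size $b'$. Equating the two counts and solving for $n$ yields the displayed polynomial inequality; dividing by $D>0$ and taking the floor (legitimate since $n-k-\sigma+1\in\mathbb{Z}$) gives the theorem. As a consistency check I would set $\epsilon(b,n)=0$, so $\sigma'=\sigma$ and $b'=b$; the fraction then collapses to $(\sigma-1)[b\sigma-(3\sigma/2-1)]/[b(k-\sigma)+\sigma-1]$, whose value drops below $1$ once $k>\sigma^2$ and $b\gg1$, recovering the known exact-MDS bound $n\le k+\sigma-1$ of \cite{BPXOR}.

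The main obstacle I anticipate is the asymptotic bookkeeping with \emph{unequal} column sizes $b_i$: the exact-MDS argument of \cite{BPXOR} assumes every column carries exactly $b$ symbols, whereas here I must route the entire count through the average $b'$ (equivalently through $\sigma'$) while simultaneously selecting the overhead-maximizing $k$-subset, so I have to argue that this worst-case subset still obeys the averaged edge budget. Pinning down the exact constant $(\sigma-1)(3\sigma/2-1)$ — rather than a loose $O(\sigma^2)$ term — is the delicate part, since it requires the tight minimal degree profile of the seeding columns rather than a crude total-edge count.
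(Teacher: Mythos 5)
Your overall strategy --- count edges in the bipartite graph of $G_{\mathcal{C}^a}$, force a low-degree profile so that BP can be seeded and sustained, cap every remaining coded symbol at degree $\sigma$, and then solve a linear inequality for $n$ after fixing the sign of $D=b(k-\sigma^\prime)+\sigma-1$ --- is the same family of argument the paper uses, and your peripheral reductions are sound: the equivalence $D>0\iff\sigma<(bk-1)/(b^\prime-1)$, the handling of the floor, and the $\epsilon=0$ consistency check against the bound of \cite{BPXOR} all check out. However, the central counting step has a genuine gap. The two quantities you propose to ``equate'' (the decodability-forced number of low-degree coded symbols, and the edge budget of $\sigma$ per coded symbol over $nb^\prime$ coded symbols) both live on the \emph{same} side of the master inequality, and neither of them is the term that actually drives the bound. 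The missing ingredient is the erasure-tolerance lower bound: because $I$ must be reconstructible from any $k$ of the $n$ columns, each of the $kb$ information symbols must appear in at least $n-k+1$ distinct columns, so the total number of edges is at least $kb(n-k+1)$. That quantity is the entire left-hand side of the comparison, and it is the only place where $n$ enters multiplied by $kb$; without it there is no inequality to solve for $n$ at all. The forced degree profile ($n-k+1$ degree-one symbols in distinct columns, plus one symbol of each degree $2,\dots,\sigma-1$) then belongs to the \emph{upper} bound: it lowers the naive budget $\sigma b^\prime n$ to $\sigma b^\prime n-(\sigma-1)(n-k+\sigma/2)$, whence the triangular correction. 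It is not a separate count to be balanced against the budget.

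Two smaller slips: the degree count runs over all $n$ columns (all $nb^\prime$ coded symbols), not over ``$n-k$ non-systematic columns'' --- the code has no systematic part, and restricting to $n-k$ columns would change the budget term $\sigma b^\prime n$; and the hypothesis $\sigma>2$ is not what routes the small-$\sigma$ regimes to the $k=\sigma$ bound $n\le kb+1+\max\{k-3,0\}$ --- it simply keeps the forced profile $1,2,\dots,\sigma-1$ nontrivial so that the triangular term is meaningful.
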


\begin{proof}
Since the code is assumed to be MDS, i.e., able to tolerate $n-k$ column erasures of $\mathcal{C}^a$,  each information symbol $v_s \in I$ must appear in at least $n-k+1$ columns, totaling up to \begin{eqnarray}
kb(n-k+1) \label{eqn1}
\end{eqnarray}
minimum appearances in $\mathcal{C}^a$. On the other hand, belief propagation decoding starts decoding from degree-one encoding symbols. So we need at least $n-k+1$ degree-one symbols in distinct columns of  $\mathcal{C}^a$ (in the worst case of $n-k$ column erasures when each may comprise one degree-one symbol). Similarly, we need at least one degree-two, one degree-three, $\dots$, one degree-$(\sigma-1)$ coding symbols to make sure that BP decoding continues. Although it is possible to have multiple degree-two symbols and continue BP decoding, by this choice we are trying to maximize the appearance of information symbols in $\mathcal{C}^a$. Note that if these symbols happen to be in distinct unerased columns, the bound could be tightened, otherwise the bound might still be loose for instance if $ \sigma > k + 1$ which is not usually typical. The rest of the $b^\prime n - (n-k+\sigma-1)$ can have at most $\sigma$ degree. Thus, $C^a$ can have at most
\begin{eqnarray}
\sigma(b^\prime n- (n-k+\sigma-1)) + n - k + \frac{\sigma(\sigma-1)}{2} \label{eqn2}
\end{eqnarray}
appearances of $kb$ information symbols. So we have the inequality $(\ref{eqn1}) \leq (\ref{eqn2})$. We can rewrite (\ref{eqn2}) in a more compact form as
\begin{eqnarray}
\sigma b^\prime n - (\sigma-1)(n-k+\sigma/2)
\end{eqnarray}

Using equation (\ref{eqn1}), and assuming we have $b(k - \sigma^\prime) + \sigma - 1 > 0$, we can collect all terms that includes $n$ and find an upper bound on $n$ as follows,
\begin{eqnarray}
n &\leq& \left\lfloor \frac{(kb+\sigma - 1)(k-1) - (\sigma - 1)(\sigma/2-1)}{b(k - \sigma^\prime) + \sigma - 1 } \right\rfloor \label{eqn3} \\
&=& k + \sigma - 1 + \\
&& \ \ \ \ \left\lfloor \frac{b(k(\sigma^\prime - \sigma) + (\sigma-1)\sigma^\prime) - (\sigma-1)(3\sigma/2 - 1)}{b(k - \sigma^\prime) + \sigma - 1} \right\rfloor \nonumber
\end{eqnarray}
where $\sigma^\prime = \sigma(1 + \epsilon(b,n))$. 
\end{proof}

Note that if $b \rightarrow \infty$ we will have $\sigma^\prime \rightarrow \sigma$ and hence equation (\ref{eqn3}) becomes identical to equation (2) of \cite{BPXOR} except the term $(\sigma-1)(\sigma/2-1)$. This term is essentially what makes the upper bound  improved (tighter).

There are two cases that are interesting to consider for understanding the asymptotical performance. First, if $b$ tends large we will have $\sigma^\prime \rightarrow \sigma$. Hence,
\begin{eqnarray}
n &\leq& k + \sigma - 1 + \left\lfloor \frac{(\sigma-1)\sigma}{k - \sigma} \right\rfloor - \textbf{1}_{(k-\sigma)|(\sigma-1)\sigma} \nonumber 
\end{eqnarray}
where $\textbf{1}_{A}$ is logical one if $A$ is true, otherwise it is zero. This indicator function is used due to the flooring operation and $\sigma$ only equals to $\sigma^\prime$ in the limit. Thus, if the code becomes array MDS in the limit, there remains no dependence of $n$ on $b$. On the otherhand, if we let large but fixed $b \leq k$, and if $k$ gets large, we shall have
\begin{eqnarray}
n &\leq& k + \sigma^\prime - 1  \nonumber \\
&=& k + \sigma(1 + \epsilon(b,n)) - 1 \label{eqn9}
\end{eqnarray}
which can be made arbitrarily large if we choose $\epsilon(b, n) \rightarrow \infty$ for a fixed $b$ and large $n$. This essentially demonstrates that as the array BP-XOR code becomes near-optimal in terms of recovery performance, the upper bound on the number of code columns $n$ can dramatically be improved. 

Although the desirable properties of the coding overhead are found, we still need specific constructions to quantify or bound the coding overhead and hence present tighter bounds on $n$ (and $r$) for a specific construction. Based on this observation, we shall present a code construction method that uses the result of Theorem \ref{Thm21} and has an appropriate $\epsilon(b,n)$ with the properties as summarized below.

\begin{itemize}
\item For fixed $k$ and rate $r$ (i.e., fixed $n$), as $b \rightarrow \infty$ we have vanishing coding overhead,  $\epsilon(b,n) \rightarrow 0$.
\item For fixed $b$ and rate $r$, as $n \rightarrow \infty$ we have a diverging coding overhead, $\epsilon(b,n) \rightarrow \infty$.
\end{itemize}

\section{Discrete Geometry Constructions of Asymptotically-MDS array BP-XOR codes}
\label{SectionMoj}

In this section, we will introduce a particular construction of asymptotically MDS array BP-XOR codes based on discrete geometry \cite{Moj} and show that they can be regarded as a special type of the class of asymptotically MDS BP-XOR codes.

The discrete geometry construction is known as Mojette codes which are based on discrete version of Radon Transform \cite{Radon}, and can be used to generate redundancy not just for rectangle two dimensional data grid but also for any convex data grid. In  our study, we consider matrix (rectangle) data and let encoder compute a linear set of projections at angles specified by a couple of coprime integers $(p,q)$ from a $b \times k$ discrete data structure $f:(z,l) \rightarrow \mathbb{N}$. Suppose that we generate $n$ projections with parameters $\{(p_i, q_i), 0 \leq i \leq n-1\}$. The length of the projection $i$, denoted by $b_i$, is a function of the number of projections $n$, the angle parameters $(p_i,q_i)$ and the data grid size $b \times k$. It can be expressed in a closed form as follows \cite{Moj},
\begin{eqnarray}
b_i =  |p_i| (k-1) +  |q_i| (b-1) + 1
\end{eqnarray}

Note that in this construction, generated projections can be treated as the columns of the asymptotically-MDS BP-XOR code. An example code with parameters $k = 3$, $b = 4$ with $n = 3$ projections with parameters $(-1,1),(1,0),(1,1)$ is shown in Fig. \ref{fig:BSW}. Each bin or symbol of the $i$-th projection, based on $(p_i,q_i)$, can be computed as given by the following compact formulation
\begin{align}
& M_{(p_i,q_i)}f(m + (b-1)q_iu(q_i) +(k-1)p_iu(p_i))  \\ & \ \ \ \  \ \ \ \ \ \ \ \ \ \ \ \ \ \ \ \ \ \ \ = \bigoplus_{z=0}^{b-1}\bigoplus_{l=0}^{k-1} f(z,l) \delta_{m  + zq_i + lp_i}\label{moj1}
\end{align}

\begin{figure}[t!]
\centering
\includegraphics[angle=0, height=48mm, width=78mm]{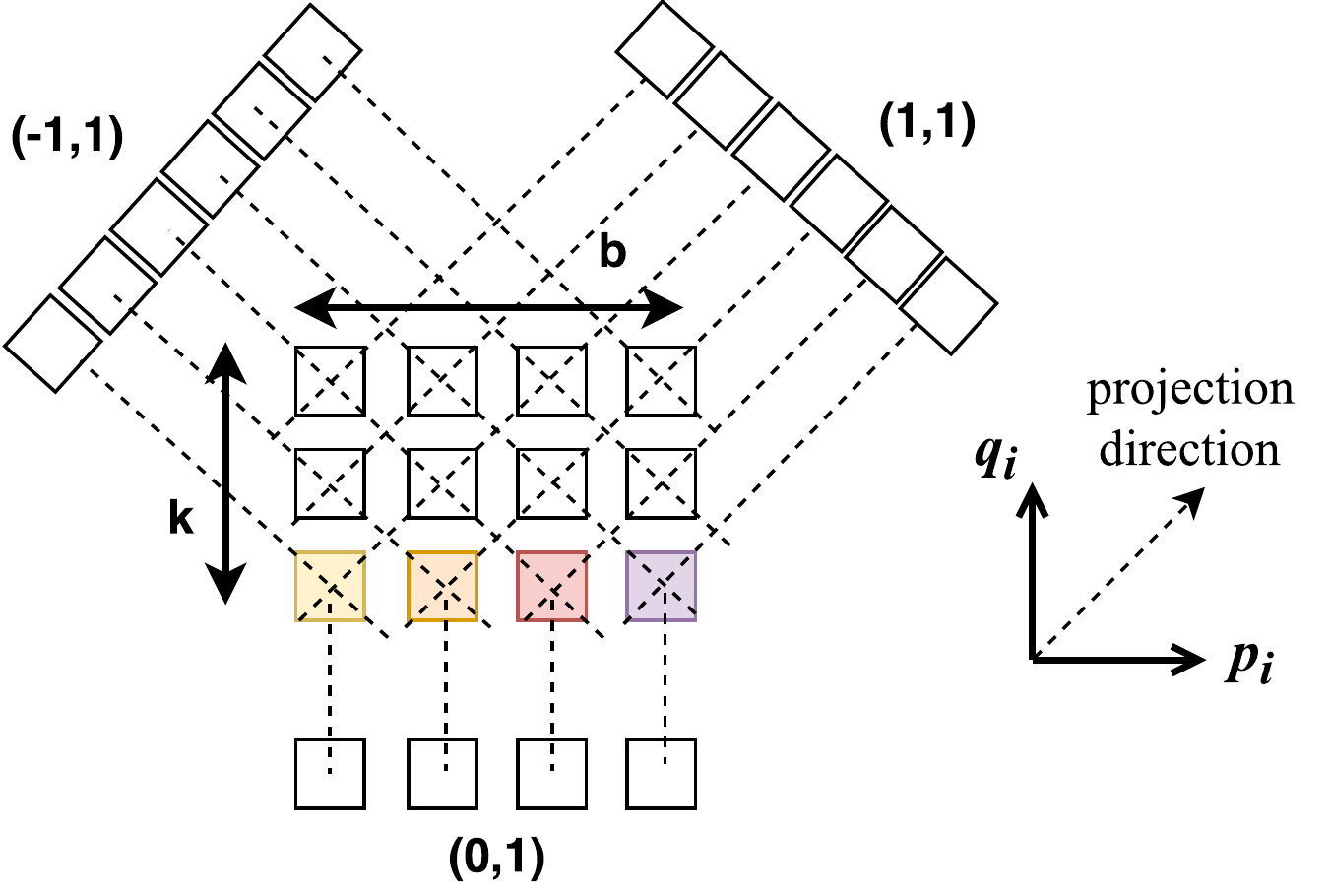}
\caption{A simple illustration of the projection concept and Mojette coding. }\label{fig:BSW}
\end{figure}
for all $m$ values satisfying the inequality,
\begin{gather*}
- (b-1)q_iu(q_i) -(k-1)p_iu(p_i) \nonumber \\
 \leq m \leq  \nonumber
\\ b_i  - (b-1)q_iu(q_i) -(k-1)p_iu(p_i) - 1
\end{gather*}
where $\bigoplus$ stands for Boolean XOR operation, $u(.)$ is the discrete unit function and  $\delta_i$ is Kronecker delta function which are given by
\[
u(s)=
\begin{cases}
1, & \textrm{if }s > 0 \\
0, & \textrm{Otherwise }
\end{cases}
, \ \ \
\delta_i=
\begin{cases}
0, & \textrm{if }i\not=0 \\
1, & \textrm{if }i = 0
\end{cases}
\]

Mojette codes can be decoded using BP algorithm and the exact reconstruction of user data matrix is possible if the projection parameters $(p_i,q_i)$ are selected judiciously according to the following Katz criterion. 

\begin{theorem}
For a given asymptotically-MDS BP-XOR code defined by $n$ projections with parameters $(p_i, q_i)$  on a $b \times k$ data matrix, exact data reconstruction is possible using iterative BP if
\begin{align}
\sum_{i=0}^{n-1} |p_i| \geq b \textrm{ or } \sum_{i=0}^{n-1} |q_i| \geq k
\end{align}
\end{theorem}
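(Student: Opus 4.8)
The plan is to recast BP decoding as the classical corner-peeling inversion of the Mojette transform and then show that the Katz condition prevents the peeling from ever stalling. First I would observe, directly from the bin formula~(\ref{moj1}), that a coded symbol (bin) of projection $i$ is the XOR of exactly those source pixels $f(z,l)$ lying on the discrete line $zq_i + lp_i = \textrm{const}$; hence a degree-one coded node is precisely a bin whose line meets the $b\times k$ grid in a single pixel. The two extreme bins of each projection always isolate a grid corner (for $p_i,q_i>0$ the smallest attainable value of $zq_i+lp_i$ forces $z=l=0$, and symmetrically for the other sign patterns), so BP can always start. This reduces the theorem to one claim: under the Katz condition the peeling never stalls, i.e.\ a fresh degree-one bin exists until every pixel is recovered.

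Next I would formalize a stall. Let $U$ be the nonempty set of still-undecoded pixels at a hypothetical stall. A bin is useless to BP exactly when its line meets $U$ in $0$ or at least $2$ pixels, since a line meeting $U$ in a single pixel would let us XOR out the already-known pixels and recover the last unknown. Thus a stall is equivalent to a nonempty $U$ such that every line of every direction $d_i=(-p_i,q_i)$ meets $U$ in a number of pixels different from one. I would then prove that the Katz inequality rules out any such $U$ by bounding the size of its bounding box from below: any such doubly covered set must span at least $\sum_i|p_i|$ in the $z$-direction and at least $\sum_i|q_i|$ in the $l$-direction. Granting this, if $\sum_i|p_i|\geq b$ the required $z$-span $\sum_i|p_i|\geq b$ exceeds the maximal span $b-1$ available among the $b$ rows, a contradiction; the case $\sum_i|q_i|\geq k$ is identical after exchanging the roles of $z$ and $l$, so only one of the two symmetric arguments need be written out.

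The main obstacle is exactly this span lower bound, which is the combinatorial heart of the Katz criterion. My approach would be a monotone-walk / zonotope-containment argument: after sorting the $n$ distinct directions by slope and starting from an extremal pixel of $U$, the doubly-covered property lets me step to a new pixel of $U$ along each direction while moving monotonically in $z$, so that the accumulated $z$-displacement reaches $\sum_i|p_i|$; equivalently, the bounding box of $U$ must contain a translate of the zonotope formed as the Minkowski sum of the elementary segments $[0,d_i]$, whose $z$- and $l$-extents are precisely $\sum_i|p_i|$ and $\sum_i|q_i|$. This step genuinely needs the projection directions to be \emph{distinct} coprime pairs, since two coincident directions would merely duplicate a single bin and defeat the bound, and it must be carried out purely through intersection counts rather than signed cancellation: because the XOR alphabet makes BP stopping sets a priori larger than the kernel of the transform, the usual real/integer \emph{ghost} argument would establish injectivity but not, by itself, the absence of a stall. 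Once the span bound is secured, the contradiction above closes the argument: under the Katz condition no nonempty $U$ can survive, BP peels every pixel, and exact reconstruction follows.
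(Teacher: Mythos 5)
The paper does not actually prove this theorem---it defers entirely to the citation \cite{katz}---so there is no internal argument to compare yours against; your proposal is an attempt to supply the missing proof. Your framing is the right one, and you correctly isolate the one point where the bare citation does not suffice: Katz's criterion, as classically proved via the ghost/polynomial argument (any function with vanishing line sums is a multiple of $\prod_i(1-x^{p_i}y^{q_i})$, whose Newton polygon already has extents $\sum_i|p_i|$ and $\sum_i|q_i|$), establishes \emph{injectivity} of the projection map, whereas the theorem as used here requires success of \emph{iterative BP}, i.e., the absence of stopping sets. Over the XOR alphabet a stopping set---a nonempty $U$ met by every discrete line of every direction in $0$ or at least $2$ pixels---need not support a kernel element, so injectivity does not by itself imply that the peeling terminates. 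Your reduction of the theorem to the nonexistence of such a $U$, via the observation from (\ref{moj1}) that degree-one bins are exactly single-pixel lines and that the corner bins of the full rectangle always provide a starting point, is correct and is a more honest account of what actually needs proving than the paper gives.

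The gap is in the one lemma you defer to: that any nonempty doubly-covered $U$ has $z$-span at least $\sum_i|p_i|$ and $l$-span at least $\sum_i|q_i|$. The monotone-walk sketch does not close as stated. From an extremal pixel $u_0$ of $U$ the direction-$1$ line through $u_0$ indeed contains a second pixel $u_1\in U$, and extremality forces $u_1$ to advance in $z$; but the direction-$2$ line through $u_1$ is only guaranteed to contain \emph{some} other pixel of $U$, which may lie on the backward side and cancel the progress already made---$u_1$ is not extremal, so you cannot control the sign of the second step, and sorting the directions by slope does not repair this. The zonotope-containment reformulation is likewise exactly the statement to be proved rather than a proof: it holds for ghost supports because of the divisibility by $\prod_i(1+x^{p_i}y^{-q_i})$, but for mere stopping sets no such algebraic structure is available, which is precisely the difficulty you yourself identified. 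To close the argument you would need either a careful induction that eliminates one direction at a time (showing that collapsing $U$ along direction $n$ leaves a set doubly covered for the remaining directions on a grid shrunk by $|p_n|$ and $|q_n|$---nontrivial, since fibers of odd size can appear), or the corner-driven argument from the Mojette-inversion literature that exploits the convexity of the rectangle and the order in which BP removes pixels. As written, the combinatorial heart of the theorem is asserted, not established.
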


\begin{proof}
The proof can be found in \cite{katz}.
\end{proof}

According to Theorem \ref{Thm21}, the maximum degree of the coded symbols play key role in the attainable block length of the BP-XOR codes. Thus, next we find the maximum degree number in the case of Mojette transform codes and see that this parameter can be adjusted based on the selection of projection parameters $(p_i, q_i)$. The following theorem quantifies this number.

\begin{theorem}
Let us use $\sigma_i, i \in \{1,2,\dots,n\} $ to denote the maximum degree of the $i$th projection with parameters $(p_i, q_i)$. We have $\sigma_i = \min\{\lceil b/|p_i| \rceil, \lceil k/|q_i| \rceil \}$ and hence $\sigma = \max_i\{\sigma_i\}$.
\end{theorem}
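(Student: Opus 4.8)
The plan is to recast the degree of a single projection bin as a lattice-point count on a line and then bound that count in each coordinate direction. Reading off the summation in (\ref{moj1}), the bin of the $i$-th projection selected by the factor $\delta_{m+zq_i+lp_i}$ aggregates precisely those entries $f(z,l)$ whose indices obey $q_i z + p_i l = -m$. Hence the degree of that bin equals the number of integer points $(z,l)$ lying both in the data rectangle $R=\{(z,l): 0\le z\le b-1,\ 0\le l\le k-1\}$ and on the line $q_i z + p_i l = c$ with $c=-m$, and $\sigma_i$ is the maximum of this count over all admissible $c$.

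First I would exploit coprimality. Since $\gcd(p_i,q_i)=1$, if $(z,l)$ is a lattice point on the line then so is $(z+p_i,l-q_i)$, and these translates by the primitive direction vector $(p_i,-q_i)$ enumerate all lattice points of the line. The lattice points therefore form an arithmetic progression whose $z$-coordinates step by $|p_i|$ and whose $l$-coordinates step by $|q_i|$. Confining the $z$-coordinates to $[0,b-1]$ permits at most $\lfloor (b-1)/|p_i|\rfloor+1=\lceil b/|p_i|\rceil$ terms, and confining the $l$-coordinates to $[0,k-1]$ permits at most $\lceil k/|q_i|\rceil$ terms; since a bin's contributors must satisfy both restrictions simultaneously, its degree is at most $\min\{\lceil b/|p_i|\rceil,\lceil k/|q_i|\rceil\}$. (If $p_i=0$ or $q_i=0$, coprimality forces the other parameter to be $\pm1$; reading the corresponding ceiling as $+\infty$ recovers the obvious full column/row degree and keeps the formula valid.)

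To show the bound is attained, assume without loss of generality $p_i,q_i\ge 0$ and that $\lceil b/|p_i|\rceil\le\lceil k/|q_i|\rceil$. Setting $T=\lceil b/|p_i|\rceil$, I would exhibit the explicit collinear family $(z_t,l_t)=((T-1-t)p_i,\,t q_i)$ for $t=0,\dots,T-1$: these all lie in $R$ because $(T-1)p_i\le b-1$ and $(T-1)q_i\le k-1$, and they all satisfy $q_i z + p_i l=(T-1)p_iq_i$, so a single bin of degree $T$ exists. Sign changes of $p_i$ or $q_i$ are absorbed by reflecting $R$, which leaves the count unchanged. Combining the upper bound with this construction yields $\sigma_i=\min\{\lceil b/|p_i|\rceil,\lceil k/|q_i|\rceil\}$, and since the code's $n$ columns are exactly the $n$ projections, the global maximum coded-symbol degree is $\sigma=\max_i\sigma_i$.

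I expect the achievability step to be the main obstacle: the upper bound treats the two coordinate directions independently, so the real work is verifying that one and the same line simultaneously realizes the tighter of the two per-direction caps rather than merely respecting each separately. The explicit progression above settles this, while the integer identity $\lfloor (b-1)/p\rfloor+1=\lceil b/p\rceil$ is what converts the natural ``number of steps plus one'' count into the ceiling form stated in the theorem.
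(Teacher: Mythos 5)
Your proposal is correct, and its upper-bound half is the same counting argument the paper uses: along the line $q_i z + p_i l = -m$ the $z$-indices advance in steps of $|p_i|$ within $[0,b-1]$ and the $l$-indices in steps of $|q_i|$ within $[0,k-1]$, so the bin degree is capped by $\min\{\lceil b/|p_i|\rceil, \lceil k/|q_i|\rceil\}$. Where you go beyond the paper is in making two things explicit that the paper only asserts. First, you invoke $\gcd(p_i,q_i)=1$ to justify that the lattice points on the line really form a single arithmetic progression with primitive step $(p_i,-q_i)$ — without coprimality the per-coordinate step sizes (and hence the ceilings) would be wrong, so this is a genuine ingredient, not a formality. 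Second, and more importantly, the theorem claims equality, not just an upper bound, and the paper's proof never shows the minimum is attained by an actual bin; your explicit collinear family $(z_t,l_t)=((T-1-t)p_i,\,tq_i)$ closes exactly that gap (and your membership checks $(T-1)p_i\le b-1$ and $(T-1)q_i\le k-1$ are correct). Your handling of the degenerate cases $p_i=0$ or $q_i=0$ via coprimality is also a detail the paper skips. One incidental note: your assignment of ranges ($z\in[0,b-1]$ stepping by $|p_i|$, $l\in[0,k-1]$ stepping by $|q_i|$) is the one consistent with the encoding sum in the paper's equation for $M_{(p_i,q_i)}f$; the paper's own proof text swaps these ranges, though it lands on the same final formula. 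In short: same skeleton, but your version actually proves the stated equality.
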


\begin{proof}
 Considering the equation (\ref{moj1}) and the worst case scenario, we would like to find the number of $l$ and $z$ values such that $zq_i + lp_i = -m$. It is not hard to see that the maximum number of $z$ values that can satisfy this equation is given by $\lceil k/|q_i| \rceil$ due to  $0 \leq z \leq k-1$. Similarly, the maximum number of $l$ values  that can satisfy this equation is given by $\lceil b/|p_i| \rceil$ due to  $0 \leq l \leq b-1$. Since the number of possibilities for $z$ and $l$ are also constrained by the two dimensional rectangular shape, we have the maximum encoding symbol degree equal to the minimum of the two i.e., $\sigma_i = \min\{\lceil b/|p_i| \rceil, \lceil k/|q_i| \rceil \}$. Thus, the maximum degree of all the code symbols is given by the maximum degree of all the projections i.e., $\sigma = \max_i \{ \min\{\lceil b/|p_i| \rceil, \lceil k/|q_i| \rceil \} \}$. 
\end{proof}

Next, we quantify the coding overhead for Mojette transform based asymptotically MDS BP-XOR codes by considering $k=\sigma$ and $k > \sigma$ cases separately. 

\subsection{ Case $k = \sigma$}

First of all, note that depending on the choices of $(p_i, q_i)$, the code overhead as well as the maximum degree of the code can change. Although, there are multiple choices for  $k = \sigma$, we provide the typical choice below that also ensures block length.

\begin{construction} \label{Cons33}
Let us consider the following choice of coprime integers,
\begin{align}
q_i = 1, p_i &\in \mathfrak{T} = \left\{-\left\lfloor\frac{n-1}{2}\right\rfloor,\dots,-1,0,1,2,\dots,\left\lceil\frac{n-1}{2}\right\rceil\right\} \label{option1}
\end{align}
where $\mathfrak{T}$ is known as canonical enumeration of integers \cite{OEIS} that goes with the name \emph{A007306} and satisfies $gcd(p_i, q_i) = 1$ for $i=0,\dots,n-1$. 
\end{construction}

Note that this construction satisfies the Katz criterion simply because collecting any $k$ projections will lead us to have $\sum |q_i| = k$. If we use the coprime integers as given by the Construction \ref{Cons33}, we have $q_i$ never equal to zero and $\sigma_i = \min\{\lceil b/ \lceil (n-1)/2 \rceil,k\}$. We note that we have $\sigma = k$ for  $b \gg 1$. We next quantify the coding overhead for this particular construction and show the asymptotically optimal property.

\begin{theorem}
For the Mojette code with parameters as given in Construction \ref{Cons33}, for $b \gg 1$, we have
\begin{eqnarray}
\epsilon(b,n) \approx \frac{n(2-r)(nr-1)}{4b}
\end{eqnarray}
where $r$ is the fixed rate of the array BP-XOR code.
\end{theorem}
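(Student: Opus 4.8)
The plan is to compute the parameter $b^\prime$ directly from the projection lengths of Construction \ref{Cons33} and then read off $\epsilon(b,n)$ from the defining relation $b^\prime = (1+\epsilon(b,n))b$. First I would specialize the closed-form length $b_i = |p_i|(k-1) + |q_i|(b-1) + 1$ to the construction, in which $q_i = 1$ for every projection. This collapses the expression to $b_i = |p_i|(k-1) + b$, so that all of the column-size variation is carried by the integers $|p_i|$ drawn from the canonical enumeration $\mathfrak{T}$.

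Next I would pin down which $k$ columns govern the \emph{maximum} coding overhead. Since a worst-case erasure pattern deletes the $n-k$ shortest projections, forcing BP reconstruction from the $k$ longest ones, the maximum overhead is attained on the sub-family $S_{\max}$ of projections carrying the $k$ largest values of $|p_i|$. Accordingly I would set $b^\prime = \frac{1}{k}\sum_{i \in S_{\max}} b_i = b + \frac{k-1}{k}\sum_{i \in S_{\max}}|p_i|$, isolating the entire overhead into the single sum $\sum_{i \in S_{\max}}|p_i|$.

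The heart of the argument is evaluating that sum. The multiset $\{|p_i|\}$ consists of a single $0$ together with each of $1, 2, \dots, \lceil (n-1)/2\rceil$ appearing (essentially) twice, so the $k = rn$ largest magnitudes are the top $k/2$ values, each taken twice, running from about $n/2$ down to about $n(1-r)/2$. Summing these via $\sum_{j=a}^{m} j = \tfrac{1}{2}[m(m+1) - (a-1)a]$ and keeping the leading behaviour for $b \gg 1$ and large $n$ yields $\sum_{i \in S_{\max}}|p_i| \approx \frac{n^2 r(2-r)}{4}$, where the factor $2r - r^2 = 1-(1-r)^2$ is exactly what produces the $(2-r)$ term. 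I expect the parity of $n$ and the boundary term at $j = \lceil (n-1)/2\rceil$ to be the only fussy points; these perturb the sum by only $O(n)$ and are absorbed into the approximation.

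Finally I would assemble the pieces. Substituting the sum into $b^\prime - b = \frac{k-1}{k}\sum_{i \in S_{\max}}|p_i|$ and using $k = rn$ gives $b^\prime - b \approx \frac{rn-1}{rn}\cdot\frac{n^2 r(2-r)}{4} = \frac{n(nr-1)(2-r)}{4}$, where the $(k-1)$ factor surviving from $b_i$ supplies the $(nr-1)$ term and the cancellation of $n^2 r$ against $rn$ leaves the clean factor $n$. Dividing by $b$ then delivers $\epsilon(b,n) = (b^\prime - b)/b \approx \frac{n(2-r)(nr-1)}{4b}$. The main obstacle is the combinatorial sum of the third step together with the justification that the worst-case $k$-subset $S_{\max}$ — rather than the all-$n$ average appearing in the general definition of $b^\prime$ — is the correct object for the \emph{maximum} overhead; once that identification is made, the remaining manipulations are routine asymptotics.
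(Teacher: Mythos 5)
Your proposal is correct and follows essentially the same route as the paper's Appendix A: both identify the worst case with the $k$ longest projections, reduce the overhead to $\frac{k-1}{kb}\sum|p_i|$ over that subset, and evaluate the sum as roughly $\frac{n^2-(n-k)^2}{4}=\frac{n^2r(2-r)}{4}$ before substituting $k=rn$. The only difference is bookkeeping: you sum the arithmetic progression of the top magnitudes directly, whereas the paper packages the same quantity through the utility function $\varphi$ and Lemma A.1 to get exact odd/even expressions and two-sided bounds before approximating.
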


\begin{proof}
See appendix A  the proof of this theorem.
\end{proof}

For fixed $r$ and $k$ (i.e., fixed $n$), if $b \rightarrow \infty$ then it is clear that $\epsilon(b,n) \rightarrow 0$ proving the asymptotical property. On the other hand, for fixed $r$ and $b$, if $n \rightarrow \infty$ then we have $\epsilon(b,n) \rightarrow \infty$. In fact, it is not hard to see that $\epsilon(b,n) = O(n^2)$. Therefore, due to these desirable properties of the overhead and considering the inequality (\ref{eqn9}), we can make $n$ arbitrarily large. Particularly we can find the following lower bound on $n$ for $k = rn = \sigma$ and $r>0.5$,
\begin{eqnarray}
n \leq rn + rn \left(1 + \frac{n(2-r)(nr-1)}{4b} \right) - 1
\end{eqnarray}
which yields the inequality
\begin{eqnarray}
n-2nr \leq \frac{n^3r^2(2-r)}{4b} 
\Rightarrow n \geq \sqrt{\frac{4b(1-2r)}{r^2(2-r)}}
\end{eqnarray}

This final lower bound shows that the value for the block length $n$ can be arbitrarily large for judiciously selected large $b$. Note that the case $k = \sigma$ has the least constraint on the code block length for any MDS array BP-XOR code. The case $k > \sigma$ is more interesting for the class of asymptotically MDS array BP-XOR codes. 

\subsection{ Case $k > \sigma$}

With classical array BP-XOR codes, the block length $n$ is constrained by the following upper bound for $b \gg 1$,
\begin{eqnarray}
n \leq k + \sigma -1 + \left\lfloor \frac{\sigma(\sigma-1)}{k-\sigma} \right\rfloor - \textbf{1}_{(k-\sigma)|(\sigma-1)\sigma}
\end{eqnarray}
which is the same for asymptotically MDS array BP-XOR codes as mentioned in Section II. However, as the block length gets large as well, we shall no longer have constraints on the size of the block length for asyptotically MDS BP-XOR codes. 

Next, we provide another set of parameters for Mojette code that shall satisfy $k > \sigma$. The possibilities of the pair $(p_i, q_i)$ selection for making $k > \sigma$ is not unique. We will consider the typical class as given in construction \ref{cons35}.

\begin{construction} \label{cons35}
Let us consider the following choice of coprime integers for $n$ projections, 

\begin{align}
q_i &= q_e > 0, \nonumber \\
p_i &\in \mathfrak{U} = \left\{\left\lceil-n+1\right\rceil_{odd},\dots,-1,1,3,\dots,\left\lceil n-1 \right\rceil_{odd}\right\} \label{option2}
\end{align}
where $q_e$ is a positive even number, and $\left\lceil . \right\rceil_{odd}$ rounds to the next biggest odd integer of the argument, respectively. 
\end{construction}

Note that using construction \ref{cons35}, it is easy to verify that we have $GCD(p_i, q_i) = 1$. Also, we have $k > \sigma = \max_i \{ \min\{\lceil b/|p_i| \rceil, \lceil k/|q_i| \rceil \} \} = \lceil k/q_e \rceil$. It is of interest to quantify the coding overhead to be able to find the upper bounds on the code block length. 

\begin{theorem}
For the Mojette code with parameters as given in construction 3.5, for $b \gg 1$, we have
\begin{align}
\epsilon(n,b) & \approx \\
& \frac{\lceil k/q_e \rceil}{kb}
\left((k-1) \left(n - \frac{\lceil k/q_e \rceil}{2} \right)  + (b-1)q_e + 1 \right) - 1 \nonumber
\end{align}
where $q_e$ is a positive even number, and $\left\lceil . \right\rceil_{odd}$ rounds to the next biggest odd integer of the argument, respectively.
\end{theorem}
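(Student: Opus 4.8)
The plan is to evaluate the maximum coding overhead directly from the projection-length formula $b_i = |p_i|(k-1) + q_e(b-1) + 1$ specialized to the given construction, in which every projection shares the same second coprime $q_i = q_e$. Because $\epsilon(n,b)$ is the \emph{maximum} overhead, I would first identify which columns a worst-case decoder is forced to retrieve, and then compute $\frac{1}{kb}\sum_{i\in S}b_i - 1$ on that set, comparing the number of downloaded symbols against the $kb$ data symbols.

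The first and decisive step is to pin down the minimum number of projections needed for reconstruction. Since $q_i = q_e$ for all $i$, any collection $S$ of projections has $\sum_{i\in S}|q_i| = |S|\,q_e$, so the Katz criterion $\sum_{i\in S}|q_i|\ge k$ holds exactly when $|S|\ge k/q_e$, i.e. $|S|\ge\lceil k/q_e\rceil = \sigma$. Hence any $\sigma$ projections already permit exact BP reconstruction and $\sigma$ is the smallest such number; each such column is \emph{fat}, carrying $\approx q_e b$ symbols, so $\sigma\approx k/q_e$ of them supply the required $\approx kb$ symbols. The maximum overhead is therefore realized by the adversarial choice of the $\sigma$ \emph{longest} columns, and since the common $q_e$ contributes equally to every $b_i$, the longest are exactly those with the largest $|p_i|$.

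The remaining step is arithmetic. I would write $\sum_{i\in S}b_i = (k-1)\sum_{i\in S}|p_i| + \sigma\bigl(q_e(b-1)+1\bigr)$ and evaluate $\sum_{i\in S}|p_i|$ over the $\sigma$ largest magnitudes available in $\mathfrak{U}$. As $\mathfrak{U}$ consists of the odd integers of both signs, each odd magnitude up to about $n-1$ occurs twice, so the $\sigma$ largest entries are drawn from $n-1, n-3, \ldots$; summing this short arithmetic progression of odd numbers gives $\sum_{i\in S}|p_i| = \sigma\bigl(n-\sigma/2\bigr)$. Substituting yields $\sum_{i\in S}b_i = \sigma\bigl[(k-1)(n-\sigma/2) + q_e(b-1)+1\bigr]$, and dividing by $kb$ and subtracting $1$ reproduces exactly the claimed expression with $\sigma=\lceil k/q_e\rceil$.

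The main obstacle is conceptual rather than computational: recognizing that, in contrast to the $k=\sigma$ regime, a worst-case reconstruction here uses only $\sigma<k$ fat projections instead of $k$ columns, which is precisely why $\sigma=\lceil k/q_e\rceil$ (and not $k$) governs the entire bound. The rest is bookkeeping: the ceiling and parity effects in selecting the top $\sigma$ odd magnitudes, and in the $\lceil\,\cdot\,\rceil_{odd}$ rounding at the ends of $\mathfrak{U}$, contribute only lower-order terms that are dominated by the $q_e(b-1)$ piece of each column length; this is exactly what the hypothesis $b\gg 1$ and the ``$\approx$'' in the statement are there to absorb, and I would justify discarding them against that dominant term.
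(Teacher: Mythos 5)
Your proposal is correct and follows essentially the same route as the paper's Appendix B: invoke the Katz criterion to conclude that $\sigma=\lceil k/q_e\rceil$ projections suffice, take the worst case to be the $\sigma$ longest projections (largest $|p_i|$), sum their lengths $b_i=|p_i|(k-1)+q_e(b-1)+1$, and normalize by $kb$. The only cosmetic difference is that you sum the top $\sigma$ magnitudes of $\mathfrak{U}$ directly to get $\sigma(n-\sigma/2)$, whereas the paper obtains the same tail sum by proving a closed form for $\sum_{i=0}^{t-1}|p_i|$ (Lemma B.1) and differencing it at $t_2=n$ and $t_1=n-\lceil k/q_e\rceil$.
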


\begin{proof}
See appendix B for the proof of this theorem.
\end{proof}

Note that as long as $q_e | k$, we have $\epsilon \rightarrow 0$ for large $b$ demonstrating the asymptotically optimal overhead property. Similarly, for fixed $r$ and $b$, if $n \rightarrow \infty$ then we have $\epsilon(n,b) \rightarrow \infty$ satisfying the second desirable property. 

Finally, using equation (\ref{eqn9}) we can express the upper bound on $n$ as follows,
\begin{equation}
n \leq k +  \frac{\sigma \lceil k/q_e \rceil}{kb}
\left((k-1) \left(n - \frac{\lceil k/q_e \rceil}{2} \right)  + (b-1)q_e + 1 \right) - 1
\end{equation}

Since it is hard to see that with this result we improve the upper bounds on the code block length, in the next section, we provide some numerical results that compute the upper bounds for comparison.

\section{Numerical Results}
\label{SectionSim}

\begin{figure}
  \centering
  \includegraphics[width=0.5\textwidth]{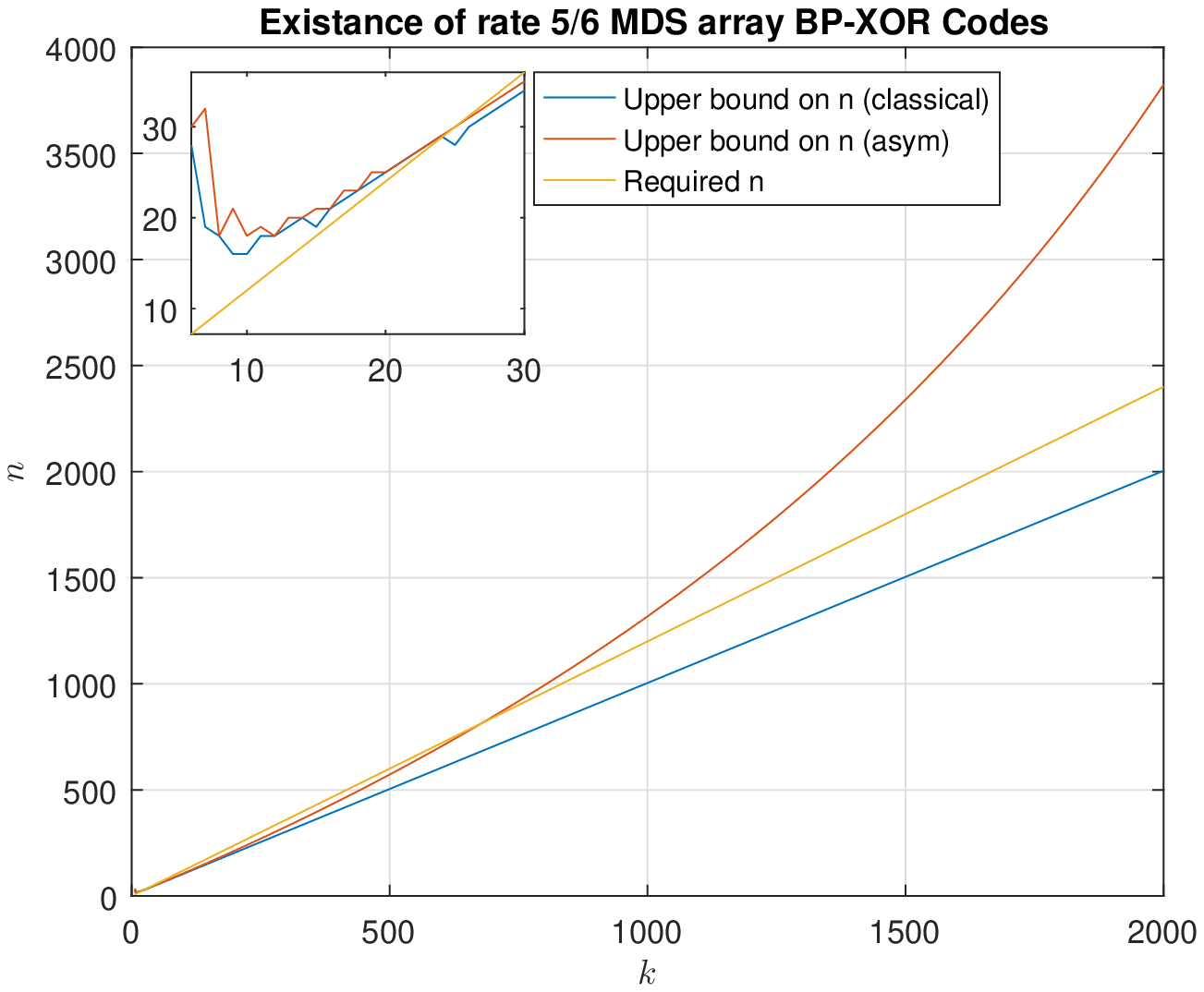}
  \caption{Upper bounds on $n$ as a function of $k$ for $b=10000$.}
\end{figure}

\begin{figure}
  \centering
  \includegraphics[width=0.5\textwidth]{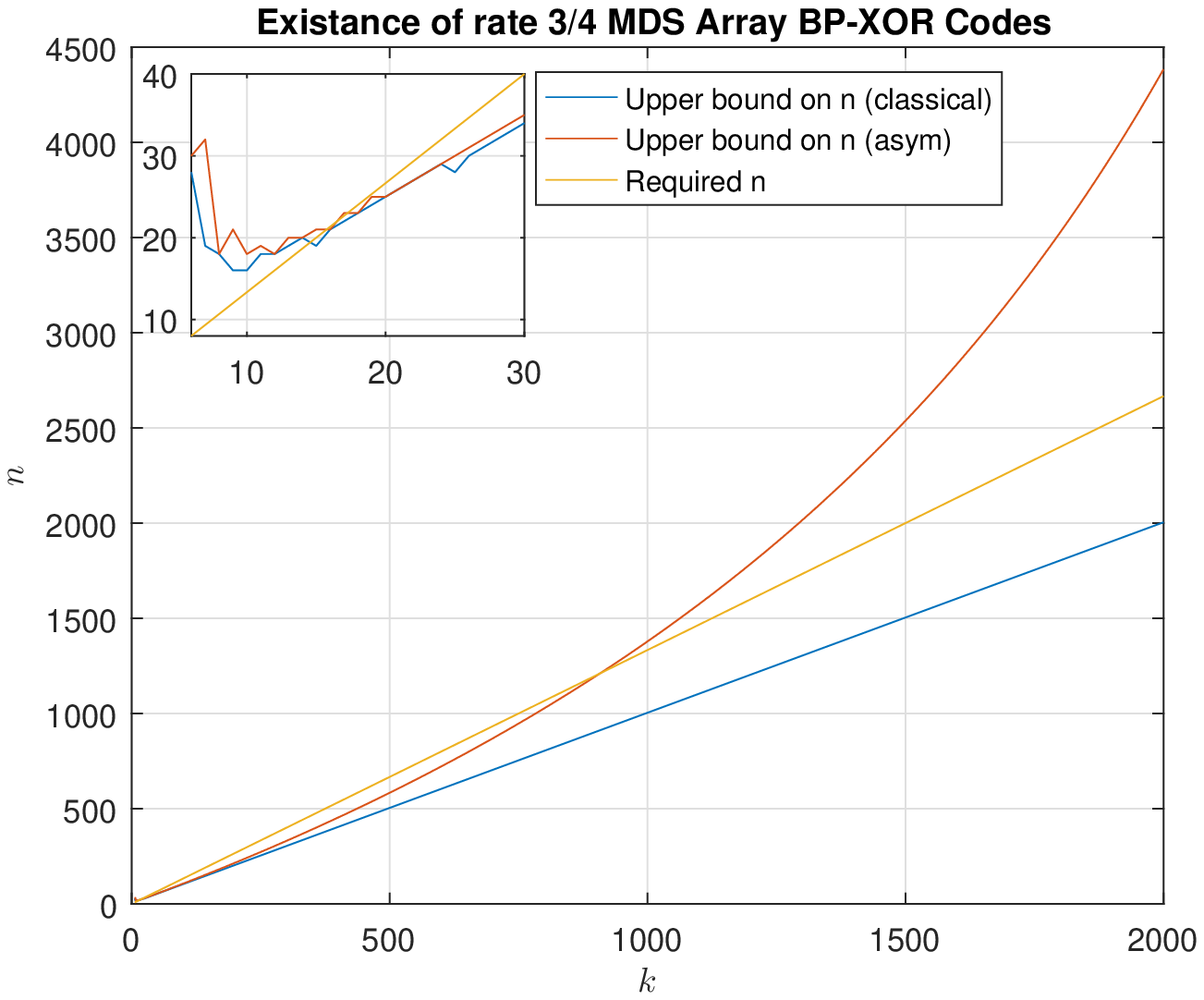}
  \caption{Upper bounds on $n$ as a function of $k$ for $b=10000$.}
\end{figure}

\begin{figure}
  \centering
  \includegraphics[width=0.5\textwidth]{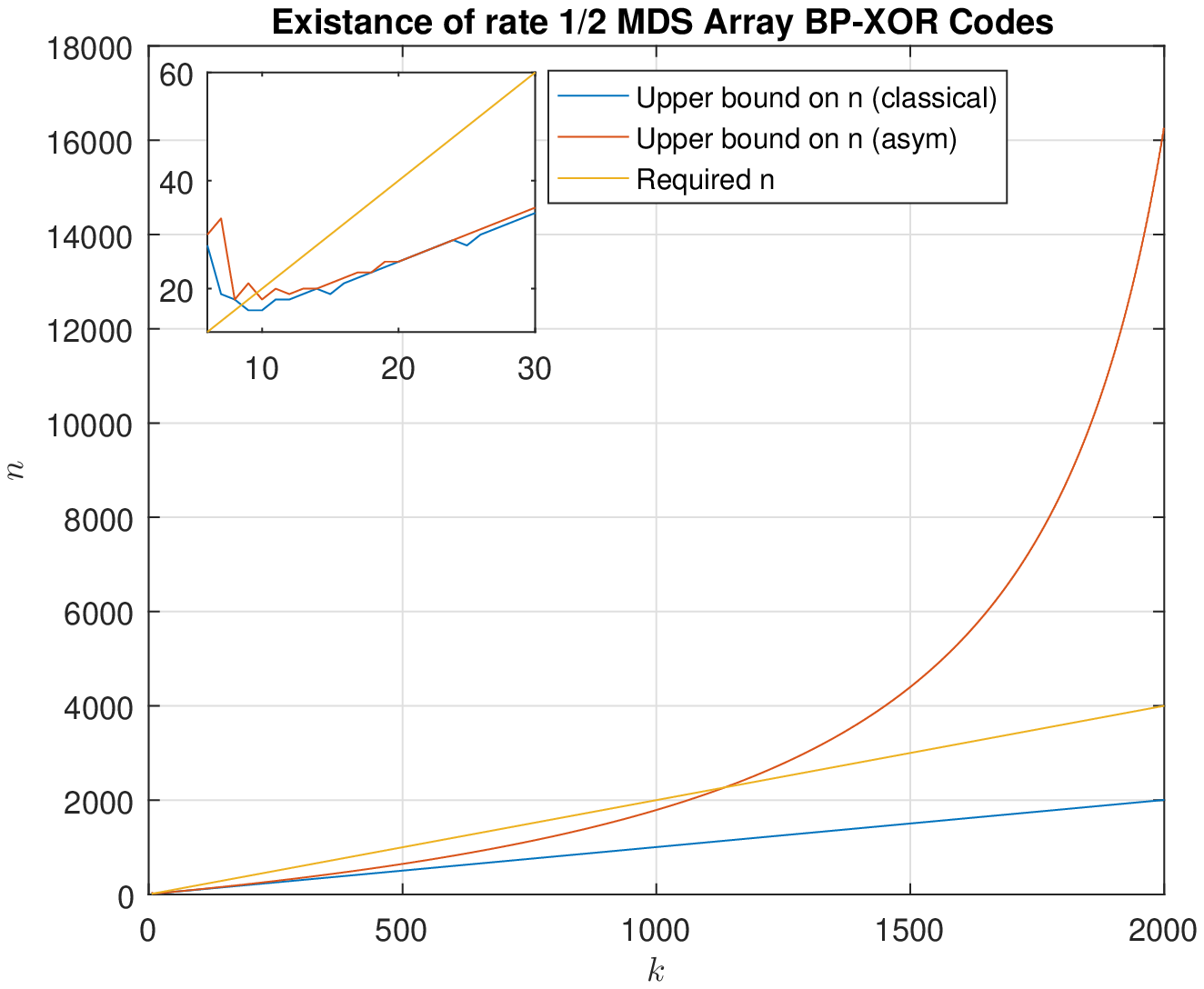}
  \caption{Upper bounds on $n$ as a function of $k$ for $b=10000$.}
\end{figure}

\begin{figure}
  \centering
  \includegraphics[width=0.5\textwidth]{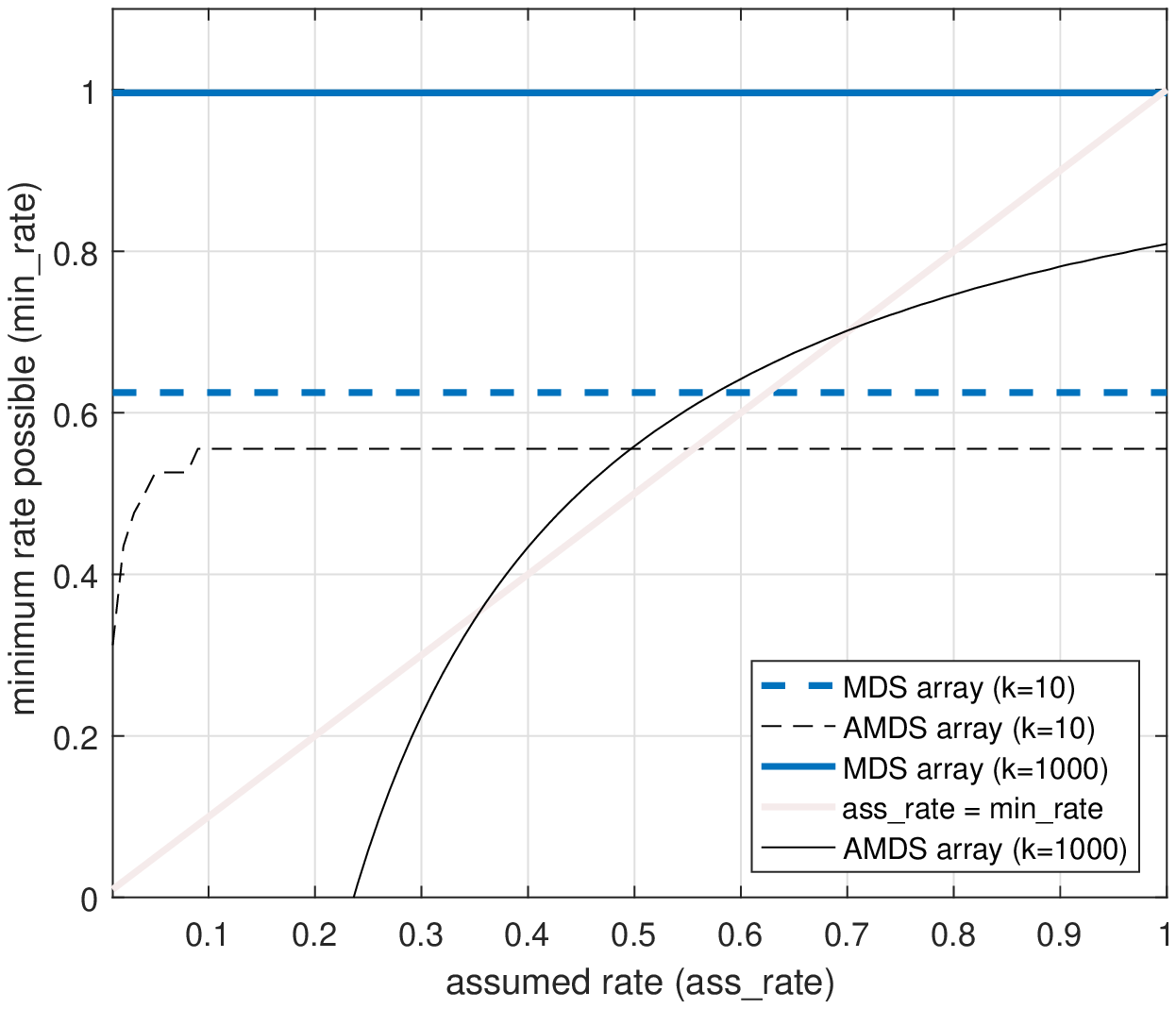}
  \caption{Upper bounds on $n$ as a function of $k$ for $b=10000$.}
\end{figure}

Let us consider $q_e = 2$ and a large $b$ value, such as $b=10000$ (this choice is completely arbitrary) and compare the upper bounds on $n$ with using classical MDS array BP-XOR codes and their asymptotically optimal version proposed in our study, abbreviated as AMDS. We present results in Fig.1, Fig. 2 and Fig. 3 each corresponding to three different rates $5/6, 3/4, 1/2$, respectively as example use cases. These results demonstrate that as the code rate decreases, classical MDS array BP-XOR codes are only possible for very small values of $k$. On the other hand, although the same is true for asymptotically MDS BP-XOR codes for small $k$, it is also observed that for large enough $k$ our bounds are bigger than the required $n$ (fixed by the code rate), allowing possible constructions to achieve the corresponding rate asymptotically MDS array BP-XOR code such as Mojette construction we have provided in previous sections. These figures also present the upper bound behavior for small $k$ on the left corner of each plot. Plots include a curve ``Required $n$" to denote the required value for $n$ for the corresponding rate $r=k/n$ code. 

In order to see clearly the range of rates that are possible with both constructions, Fig. 5 depicts the minimum rate that is possible as a function of the assumed rate. Note that with asymtotically MDS array BP-XOR codes, the upper bound on $n$ depends on the coding overhead which is a function of rate of the code. Thus, the minimum code rate changes as the assumed code rate changes. For each assumed rate, we calculate the upper bound and then compute the minimum code rate possible. With respect to classical MDS BP-XOR codes, since the upper bound does not change with varying assumed rate (since the coding overhead is always zero), the curves turns out to be flat. 

According to Fig. 5, the region that lies above the curves are the possibilities of the code rate. However, there is no guarantee each and every assumed rate would be achievable. However, as can be seen as $k$ gets large it becomes impossible to construct classical MDS array BP-XOR codes with rate smaller than 1. In contrast, by relaxing the exact MDS condition (such as adapting asymptotically MDS constructions), we can improve the the region of possibilities  for better achievability. With this study, we have just provided one simple construction based on discrete geometry (with judicious selection of parameters) that helps improve the upper bounds on the code block length $n$. Other constructions may help improve the results presented in this subsection.

\section{Conclusion}
\label{SectionConc}
Array BP-XOR codes are attractive data protection schemes for low-complexity and optimal reliability. Their finite versions are shown to have limitations on the maximum block length when the coding symbol degree is particularly lower than the data size. We have shown in this study, this limitation can greatly be relaxed by extending the original optimal class to asymptotically optimal class. We have also have shown one particular code construction based on discrete geometry that satisfies all the requirements of being asymptotically MDS array BP-XOR codes. These codes can be encoded and decoded in linear time with the block length and the achievable bound on the block length is far from that of the finite counterpart. 

\appendices
\section{Proof of Theorem 2.4}
Let us start by defining the following utility function,
\begin{eqnarray}
\varphi(x) = \left\lfloor\frac{x}{2}\right\rfloor\left(\left\lfloor\frac{x}{2}\right\rfloor + 1\right) \textmd{ for $x$ } \geq 0.
\end{eqnarray}
Also let $I_t = \{0,1,\dots,t-1\}$. Using these definitions, we state the following lemma next.

\emph{Lemma A.1:} For the projection set given as in (\ref{option1}), we have the sum $\sum_{i=0}^{t-1} |p_i|$ that can be expressed in a closed form using the utility function
\[
\sum_{i \in I_t} |p_i| = \frac{1}{2} \left( \varphi(t) + \varphi(t-1) \right) =
\begin{cases}
\frac{t^2-1}{4}, & \textrm{if $t$ is odd} \\
\frac{t^2}{4}, & \textrm{if $t$ is even}
\end{cases}
\]
This lemma can easily be proved by considering $t$ odd and even cases using induction, separately. Note that the integer sequence $\sum_{i \in I_t} |p_i|$ is given by \emph{A002620} \cite{OEIS}. Using this result, for a given pair of projections $t_2$ and $t_1$ satisfying $t_2 > t_1$, with the associated projection parameters $(p_i^{(t_2)}, q_i^{(t_2)} = 1)$ and $(p_i^{(t_1)}, q_i^{(t_1)} = 1)$ selected based on construction 2.2 (\ref{option1}), we can deduce that
\begin{eqnarray}
\frac{t_2^2 - t_1^2 - 1}{4} \leq \sum_{i=0}^{t_2-1} |p_i^{(t_2)}| - \sum_{j=0}^{t_1-1} |p_j^{(t_1)}| \leq \frac{t_2^2 - t_1^2 + 1}{4} \label{eqnsum}
\end{eqnarray}

Note that since $q_i=1$, it is sufficient to collect $k$ projections for perfect reconstruction. Thus, the upper/lower bounds given in equation (\ref{eqnsum}) are particularly useful if we set $t_2 = n$ and $t_1 = n - k$ to be able find the contributions from the largest $k$  projections in the sum that appears in the worst case coding overhead expression. Let $i^\prime$ be the index such that $p_{i^\prime}^{(t_2)} = p_0^{(t_1)}$ and define the set
\begin{eqnarray}
S = \{i^\prime, i^\prime + 1, \dots, i^\prime + n - k - 1\}
\end{eqnarray}

The worst case coding overhead in this case is given by the following
\begin{align}
\epsilon(n,b) &= \frac{1}{kb} \left( \sum_{i \in I_t \backslash S} |p_i| (k-1) + |q_i| (b-1) + k \right) - 1 \\
&= \frac{(b-1)k + \frac{k-1}{2}\left(\varphi(n) + \varphi(n-1)\right)}{kb}  \nonumber \\
& \ \ \ \ \ + \frac{-\frac{k-1}{2}\left( \varphi(n-k) + \varphi(n-k-1)\right) + k}{kb}  - 1 \label{eqn10} \\
&= \frac{k-1}{2kb}\left(\varphi(n) + \varphi(n-1) - \varphi(n-k) - \varphi(n-k-1)\right)
\end{align}
where Equation (\ref{eqn10}) follows from the conjecture \emph{Lamma 1}.
Again, using conjecture \emph{Lamma 1} and Equation (\ref{eqnsum}), and through some algebra, we can bound the worst case coding overhead as follows,
\begin{eqnarray}
\frac{k-1}{4kb}\left(2kn-k^2 - 1\right)
\leq \epsilon(n,b) \leq \frac{k-1}{4kb}\left(2kn-k^2 + 1\right)
\end{eqnarray}
which can  be accurately approximated for $b \gg 1$ as
\begin{eqnarray}
\epsilon(n,b) \approx \frac{k-1}{4kb}\left(2kn-k^2\right)  = \frac{k-1}{4b}(2n-k)\label{eqn13}
\end{eqnarray}
from which the result follows.

\section{Proof of Thm.}

Let us start by stating the following lemma.

\emph{Lemma B.1:} For the projection set given as in (\ref{option2}) with $t$ projections, we have the sum $\sum_{i=0}^{t-1} |p_i|$ that can be expressed in a closed form using the utility function
\[
\sum_{i=0}^{t-1} |p_i| =
\begin{cases}
\frac{t^2+1}{2}, & \textrm{if $t$ is odd} \\
\frac{t^2}{2}, & \textrm{if $t$ is even}
\end{cases}
\]

\emph{Proof:} Let us consider the sum for even and odd $t$ separately. First we assume $t$ to be odd. Let us define the set
\begin{align}
\mathfrak{U}_{a} = \left\{\left\lceil-t+1\right\rceil_{odd}-a,\dots,-1-a,1-a,\dots,\left\lceil t-1 \right\rceil_{odd}-a\right\}
\end{align}
and notice that $\mathfrak{T} = \mathfrak{U}_0 \cup \mathfrak{U}_{1}$. Since these sets are disjoint, we have
\begin{eqnarray}
\sum_{i \in \mathfrak{T}} |p_i| = \sum_{i \in \mathfrak{U}_0} |p_i| + \sum_{i \in \mathfrak{U}_{1}} |p_i| = 2\sum_{i \in \mathfrak{U}_{1}} |p_i| + 1
\end{eqnarray}
Using this relationship and the result of Lemma A.1, we can express
\begin{eqnarray}
\sum_{i \in \mathfrak{U}_0} |p_i| &=& \sum_{i \in \mathfrak{U}_1} |p_i| + 1 \\
&=& \frac{\sum_{i \in \mathfrak{T}} |p_i| - 1}{2} + 1 \\
&=& \frac{\frac{1}{2}(\phi(2t)-\phi(2t-1))-1}{2} + 1 = \frac{t^2+1}{2}.
\end{eqnarray}

Now let us assume $t$ to be even. For this particular assumption we can rewrite
\begin{eqnarray}
\mathfrak{T} = \mathfrak{U}_0 \cup \mathfrak{U}_{1} \cup \{t\}
\end{eqnarray}
Using this observation and the result of Lemma A.1, we can express
\begin{eqnarray}
\sum_{i \in \mathfrak{U}_0} |p_i| &=& \sum_{i \in \mathfrak{U}_1} |p_i| = \frac{\sum_{i \in \mathfrak{T}} |p_i| - t}{2} \\
&=& \frac{1}{2} \left( \frac{(2t+1)^2 - 1}{4} - t \right) = \frac{t^2}{2}
\end{eqnarray}
which completes the proof of the lemma.

According to Theorem 3.1, we need to have $\sum_{i=0}^{t-1}|q_i| = tq_e \geq k$. This implies $t = \lceil k/q_e \rceil$ projections are sufficient for perfect reconstruction. For a given pair of projections $t_2$ and $t_1$ satisfying $t_2 > t_1$, with the associated projection parameters $(p_i^{(t_2)}, q_i^{(t_2)} = q_e)$ and $(p_i^{(t_1)}, q_i^{(t_1)} = q_e)$ selected based on construction 3.6, we can deduce that
\begin{eqnarray}
\frac{t_2^2 - t_1^2 - 1}{2} \leq \sum_{i=0}^{t_2-1} |p_i^{(t_2)}| - \sum_{j=0}^{t_1-1} |p_j^{(t_1)}| \leq \frac{t_2^2 - t_1^2 + 1}{2} \label{eqnsumX}
\end{eqnarray}

To be able find the contributions from the largest $\lceil k/q_e \rceil$  projections, we set $t_2 = n$ and $t_1 = n - \lceil k/q_e \rceil$. Using similar arguments to previous appendix, we can express the worst case coding overhead in this case as follows
\begin{align}
\epsilon(n,b) &= \frac{1}{kb} \left( \sum_{i \in I_t \backslash S} |p_i| (k-1) + |q_i| (b-1) + \lceil k/q_e \rceil \right) - 1 \\
&= \frac{(b-1)q_e \lceil k/q_e \rceil + \frac{k-1}{2}\left(\varphi(n) + \varphi(n-1)\right)}{kb}  \nonumber \\
& \ \ \ \ \ - \frac{\frac{k-1}{2}\left( \varphi(n-\lceil k/q_e \rceil) + \varphi(n-\lceil k/q_e \rceil-1)\right)}{kb}  \label{eqn10} \\
& \ \ \ \ \ + \frac{\lceil k/q_e \rceil}{kb} - 1
\end{align}

Using equation (\ref{eqnsumX}) and $b \gg 1$, we can accurately approximate the worst case coding overhead as,
\begin{align}
\epsilon(n,b) & \approx \\
& \frac{\lceil k/q_e \rceil}{kb}
\left((k-1) \left(n - \frac{\lceil k/q_e \rceil}{2} \right)  + (b-1)q_e + 1 \right) - 1 \nonumber
\end{align}

% that's all folks
\end{document}